\newtheorem{theorem}{Theorem}
\newcommand{\beq}{\begin{equation}}
\newcommand{\eeq}{\end{equation}}
\newcommand{\mc}{\mathcal}
\newcommand{\mb}{\mathbb}
\newcommand{\eqdef}{\triangleq}
\newcommand{\ra}{\rightarrow}
\newcommand{\rev}{^\text{r}}
\newcommand{\B}[1]{\bar{#1}}
\DeclareMathOperator*{\argmin}{argmin}
\begin{document}

\title{Minimum Power to Maintain a Nonequilibrium Distribution of a Markov Chain}

%
%
%
%


\author{%
  \IEEEauthorblockN{Dmitri S. Pavlichin}
  \IEEEauthorblockA{Stanford University\\
                    dmitrip@stanford.edu}
  \and
  \IEEEauthorblockN{Yihui Quek}
  \IEEEauthorblockA{Stanford University\\
                    yquek@stanford.edu}
  \and
  \IEEEauthorblockN{Tsachy Weissman}
  \IEEEauthorblockA{Stanford University\\
                    tsachy@stanford.edu}
}

\date{\today}

\maketitle

\begin{abstract}

Biological systems use energy to maintain non-equilibrium distributions for long times, e.g.  of chemical concentrations or protein conformations.  What are the fundamental limits of the power used to ``hold'' a stochastic system in a desired distribution over states?  We study the setting of an uncontrolled Markov chain $Q$ altered into a controlled chain $P$ having a desired stationary distribution.  Thermodynamics considerations lead to an appropriately defined Kullback-Leibler (KL) divergence rate $D(P||Q)$ as the cost of control, a setting introduced by Todorov, corresponding to a Markov decision process with mean log loss action cost.

The optimal controlled chain $P^*$ minimizes the KL divergence rate $D(\cdot||Q)$ subject to a stationary distribution constraint, and the minimal KL divergence rate lower bounds the power used.  While this optimization problem is familiar from the large deviations literature, we offer a novel interpretation as a minimum ``holding cost'' and compute the minimizer $P^*$ more explicitly than previously available.  We state a version of our results for both discrete- and continuous-time Markov chains, and find nice expressions for the important case of a reversible uncontrolled chain $Q$, for a two-state chain, and for birth-and-death processes.


\end{abstract}

%
%
%
%

\section{Introduction} \label{sec:introduction}

Let $(X_1,X_2,\ldots)$ be a sequence of random variables forming a first-order 
Markov chain on a finite set $\mc{X}$ with
transition probability 
matrix $Q$, where $Q_{ij} = \mb{P}_Q(X_{t+1}=j|X_t=i)$
for all $t$ and $i,j\in\mc{X}$.
We think of $Q$ as the ``uncontrolled'' or ``base'' chain.
If $Q$ is irreducible and aperiodic, then there exists a unique stationary 
distribution $\pi$, viewed as a row vector:
%
\begin{equation}
    \pi Q = \pi
\end{equation}
Let the initial state $X_1$ have distribution $\mu$.  Then, denoting $X_1^T 
\eqdef (X_1, \ldots X_T)$, the distribution of $X_1^T$ is:
\begin{align}
    p_{Q,\mu}(x_1^T) &\eqdef \mb{P}_{Q,\mu}(X_1^T = x_1^T) = \mu_{x_1} 
    \prod_{t=2}^T Q_{x_{t-1},x_t} \label{eq:def_MC_distribution}
\end{align}




Let $\pi^*$ be some probability distribution on $\mc{X}$ viewed as a row 
vector.  We study the nearest Markov chain transition matrix $P$ to $Q$ having 
$\pi^*$ as its stationary distribution:
\begin{align}
    P^* &= \argmin_{P: \pi^* P = \pi^*} D(P||Q) 
    \label{eq:optimization_problem_in_introduction}
\end{align}
where $D(P||Q)$ is the Kullback-Leibler (KL) divergence rate between Markov 
chains with transition matrices $P$ and $Q$ 
\cite{CsiszarCoverChoi87}~\footnote{This expression is independent of $\mu = 
\mb{P}(X_1)$ for aperiodic, irreducible $P$.}:
\begin{equation}
    D(P||Q) \eqdef \lim_{T \ra \infty} \frac{1}{T} \mb{E}_P 
    \log\left(\frac{p_P(X_0^T)}{p_Q(X_0^T)}\right) 
    \label{eq:def_KL_div_MC_discrete_time_in_introduction}
\end{equation}
where we note that $D(P||Q)$ is independent of the initial distribution $\mu$.

We think of $P$ as the ``controlled'' or ``driven'' chain and $D(P||Q)$ as the 
cost of control per unit time -- the power.
We further consider the analogous question for the case of a continuous-time 
Markov chain $(X_t)_{t \in [0,\infty)}$.


This setting is inspired by the following thought experiment due to Feynman 
\cite{FeynmanLectures}: a person holds a heavy bag above the floor for an hour 
and gets tired.  The net work done on the object is zero\footnote{Since work is 
the product of force and distance displaced, the latter of which is zero.}, so 
why does she get tired?  A table could hold the same bag indefinitely without 
an energy source, and so could the person if she were frozen solid, 
the bag hanging on her stiff, lifeless limb.
The latter observation implicates the microscopic dynamics of muscles as key to 
this question.
A toy model for the motion of striated muscle fibers -- see 
\cite{Huxley57,Qian00} for an extended discussion -- is of a random walk in a 
periodic energy potential\footnote{The myosin protein joining the fibers 
together is doing the random walking, the energy potential having 
periodically-spaced minima corresponding to 
discrete steps along the fiber.}.  Attachment of the heavy bag pulls on the 
muscle fibers, biasing the random walk in the direction of gravity, tilting the 
energy potential.  The person must use chemical energy\footnote{The hydrolysis 
of ATP molecules.} to de-bias the random walk in such a way that the bag is 
held at the desired height above the floor.  
If the person is frozen solid,
then the underlying random walk stops\footnote{Or slows down a lot, as lowering 
the temperature reduces the transition rate between the potential's energy 
minima.}, and so chemical energy is no longer used to hold the bag.
We ask: what is a lower bound on the power to hold the bag?
%
%

We recast the above story as optimization problem 
(\ref{eq:optimization_problem_in_introduction}).
The state space $\mc{X}$ corresponds to the possible configurations of the 
system (the position of myosin along a fiber and its internal state).
The uncontrolled Markov chain $Q$ corresponds to the underlying fluctuations of 
the myosin molecule along a filament and the controlled chain $P$ corresponds 
to chemically driving the system.  

The control goal is macroscopic: the net force the person exerts on the bag is 
the sum of the forces due to each microscopic subsystem (myosin protein).  We 
do not get to control each subsystem separately, but can control them all in 
the same way, so that each subsystem corresponds to a trajectory drawn from 
Markov chain $P$ independently of other subsystems.  This notion of macroscopic 
control is reflected in the KL divergence cost 
(\ref{eq:def_KL_div_MC_discrete_time_in_introduction}), which is stated in 
terms of the probability distribution (\ref{eq:def_MC_distribution}) over 
microscopic trajectories, rather than in terms of a single trajectory.  

Our choice of the Kullback-Leibler divergence as the control cost function is 
motivated by this quantity's appearance in thermodynamics as proportional to 
the free energy difference from the equilibrium distribution over trajectories, 
which in turn lower bounds the work to prepare a non-equilibrium distribution 
over trajectories.
KL divergence control (the microscopic per-trajectory setting) was introduced 
in the reinforcement learning literature by Todorov \cite{Todorov06,Todorov09} 
and has connections to data compression; we discuss this and other motivations 
for our work in section \ref{sec:KL_div_cost}. The problem of maintaining a 
target nonequilibrium distribution has been studied recently by 
\cite{HorowitzZhouEngland17,HorowitzEngland17}, using a different notion of 
cost.  We discuss our work's relation to prior works in section \ref{subsec: 
comparison}.

Minimizing the KL divergence with respect to the first argument --  computing 
the I-projection -- 
connects this setting to large deviations theory.  Indeed, most of the 
minimum-cost controlled chains we compute first appeared in the computation of 
rate functions for large deviations of the empirical measure of both discrete- 
and continuous-time Markov chains 
\cite{DemboZeitouniBook,BaldiPiccioni98,CsiszarCoverChoi87}.  The novelty of 
our work lies in relating these results to the minimum-power control setting -- 
showing the minimized KL divergence to be the minimum power to ``hold'' a 
nonequilibrium distribution; in aggregating related problem statements -- in 
continuous and discrete time and for reversible base chains; in computing some 
of these minimizations more explicitly than previously available; and in 
computing these minimizations for a few common examples like the 
birth-and-death chain and the two-state chain.

This work is organized as follows.  Section \ref{sec:KL_div_cost} motivates the 
use of KL divergence as energy cost by drawing from information theory and 
optimization settings and contains definitions of this cost function in 
discrete and continuous time. Section \ref{sec:minimum_power} shows how this 
energy cost of holding a given target distribution may be analytically 
minimized.  Section \ref{sec:examples} contains several examples that apply our 
theory to calculate the minimum-power controlled chain, including a 
birth-and-death chain which serves as a toy model of the muscular fiber, 
addressing the motivating question of Feynman. We conclude with a summary and 
outlook in section \ref{sec:discussion}.

We release code for computing the minimum-cost chains in this work at 
\href{https://github.com/dmitrip/controlledMC}{https://github.com/dmitrip/controlledMC}.

\section{Kullback-Leibler divergence rate as the cost of control of Markov 
chains} \label{sec:KL_div_cost}

We motivate the KL divergence rate between Markov chains in both discrete and 
continuous time as the cost function lower bounding the power in the 
bag-holding thought experiment.  We present a thermodynamics perspective in 
subsection \ref{subsec:thermo} and an equivalent perspective due to Todorov 
\cite{Todorov06,Todorov09} of a Markov decision process with log ratio cost 
function in subsection \ref{subsec:log_loss_action_cost}.  We summarize known 
expressions for the KL divergence between Markov chains in discrete and 
continuous time in subsection \ref{sec:MC_discrete_time} and 
\ref{sec:MC_continuous_time} respectively. Finally, \ref{subsec: comparison} 
places this work in context with related work. 

\subsection{KL divergence in thermodynamics}\label{subsec:thermo}

We summarize briefly the appearance of the KL divergence in measuring work in 
statistical mechanics. 
Below, let $D(p||q) = \sum_{i \in \mc{X}} p(i) \log(p(i)/q(i))$ denote the 
Kullback-Leibler (KL) divergence between distributions $p$ and $q$ on finite 
set $\mc{X}$ and let $H(p) = \sum_{i \in \mc{X}} p(i) \log(1/p(i))$ denote the 
entropy of distribution $p$. 
Let \begin{equation}
    q = \left(\frac{1}{Z}e^{-\beta U(i)}\right)_{i \in \mc{X}} 
    \label{eq:Boltzmann_distribution}
\end{equation}
be the Boltzmann distribution on $\mc{X}$, where $U(\cdot)$
is the energy function (sometimes called the energy potential or internal 
energy), $\beta$ is the inverse temperature, and $Z = \sum_{i \in \mc{X}} e^{-\beta U(i)}$ is the partition function.
Denote the free energy of distribution $p$ by $F(p)$ 
\cite{MezardMontanariBook}:
\begin{align}
    F(p) &\eqdef \mb{E}_p(U(X)) - \frac{1}{\beta} H(p) \\
    &= \frac{1}{\beta} \mb{E}_p \log 
    \left(\frac{p_X}{q_X}\right) - \frac{1}{\beta} \log(Z)\\
    &= \frac{1}{\beta} D(p||q) - \frac{1}{\beta} \log(Z)
\end{align}
where the expectation $\mb{E}_p(\cdot)$ is over random variable $X$ with 
distribution $p$.  Then $F(\cdot)$ is minimized at equilibrium $p=q$, so that 
$F(q) = -\frac{1}{\beta}\log(Z)$.

In thermodynamics \cite{KittelKroemerBook,LandauLifshitz} the work $W$ to 
prepare the ``controlled'' distribution $p$ starting from the ``base'', 
equilibrium distribution $q$ (also known as the work on the system) is at least 
the free energy difference:
\begin{equation}
    W \geq F(p) - F(q) = \frac{1}{\beta} D(p||q) 
    \label{eq:work_lower_bound_free_energy}
\end{equation}
As is customary in this setting, there is in the background a notion of a 
stochastic process transforming initial states into final states, and the work 
$W$ in (\ref{eq:work_lower_bound_free_energy}) is an average over realizations 
of this process.  In Appendix (\ref{app:physical_systems}) we provide a 
physical example in the spirit of the Szilard's engine thought experiment in 
thermodynamics, for which the KL divergence does emerge as the work done. 
In the bag-holding thought experiment, there is a large collection of 
independent myosin systems, and the total work is the sum of the works on each 
system.  We imagine the number of microscopic systems to be large enough that
fluctuations about the average work $W$ per subsystem are small, so it is this 
average work that's our object of study.

The KL divergence cost is familiar in data compression, where the ``energy'' of 
symbol $X$ drawn from distribution $p$ is $U(X) = -\log(p_X)$.  Any compression 
scheme must use at least $\mb{E}_p(U(X)) = H(p)$ bits  to encode a sample from 
$p$ \cite{Shannon1948,CoverThomasBook} on average over draws from distribution 
$p$.  If we use a compression scheme that instead uses $-\log(q_X)$ bits to 
encode symbol $X$ -- a mismatched code -- then we would pay $D(p||q)$ extra 
bits per symbol on average.  Section \ref{sec:MC_discrete_time} contains 
analogous remarks for compressing samples drawn from Markov chain 
distributions.

\subsection{Markov chains in discrete time} \label{sec:MC_discrete_time}

In the Markov chain control setting, we apply the preceding picture with 
alphabet $\mc{X}^{T}$ (trajectories of length $T$) and with Markov chain 
distributions on $\mc{X}^T$ with a desired marginal distribution $\pi^*$.  We 
consider the continuous time setting in section \ref{sec:MC_continuous_time}.

A discrete time Markov chain distribution $p_{Q,\mu}^{(T)}$ on the set 
$\mc{X}^{T}$ is the Boltzmann distribution with energy function 
$U_{Q,\mu}^{(T)}$ parametrized by the stochastic transition matrix $Q = 
(Q_{ij})_{i,j\in\mc{X}}$ and the initial distribution vector $\mu = (\mu)_{i 
\in \mc{X}}$ (obtained by taking the logarithm of the rightmost quantity in 
(\ref{eq:def_MC_distribution})):  
\begin{equation}
    U_{Q,\mu}^{(T)}(x_1^T) = -\log(\mu_{x_1}) - \sum_{t=2}^{T} 
    \log(Q_{x_{t-1},x_t})
\end{equation}
where $x_1^T \eqdef (x_1, \ldots, x_T)$.  Then $p_{Q,\mu}^{(T)}(x_1^T) = e^{-U_{Q,\mu}^{(T)}(x_1^T)}$, where $Z = \beta = 
1$ (\ref{eq:Boltzmann_distribution}).  Given another transition matrix $P$ and 
initial distribution $\nu$, the work to prepare Markov chain distribution 
$p_{P,\nu}^{(T)}$ starting from the Boltzmann distribution $p_{Q,\mu}^{(T)}$ is 
lower bounded by the free energy difference $F(p_{P,\nu}^{(T)}) - 
F(p_{Q,\mu}^{(T)})$ (\ref{eq:work_lower_bound_free_energy}).  In the limit $T 
\ra \infty$, the work per time step -- the power -- is lower bounded by
\begin{align}
    \lim_{T \ra \infty} \frac{1}{T} W(p_{P,\nu}^{(T)}) &\geq \lim_{T \ra 
    \infty} \frac{1}{T} \left(F(p_{P,\nu}^{(T)}) - F(p_{Q,\mu}^{(T)})\right) 
    \label{eq:power_lower_bound_free_energy_mc} \\
    &= \lim_{T \ra \infty} \frac{1}{T} D(p_{P,\nu}^{(T)}||p_{Q,\mu}^{(T)}) \\
    &= \lim_{T \ra \infty} \frac{1}{T} \mb{E}_{p_{P,\nu}^{(T)}} 
    \log\left(\frac{p_{P,\nu}^{(T)}(X_1,\ldots,X_T)}{p_{Q,\nu}^{(T)}(X_1,\ldots,X_T)}\right) 
    \label{eq:KL_div_MC_discrete_time_expression} \\
    &= \sum_{i \in \mc{X}} \pi(P)_i \sum_{j \in \mc{X}} 
    P_{ij} \log\left(\frac{P_{ij}}{Q_{ij}}\right) \\
    &\eqdef D(P||Q) \label{eq:def_KL_div_MC_discrete_time_1}
\end{align}
where $\pi(P)$ is the stationary distribution of transition matrix $P$ and the 
last equality defines the KL divergence rate \cite{CsiszarCoverChoi87} between 
Markov chains with transition matrices $P$ and $Q$.

In data compression, a sample $X_1^T \sim p_{Q,\mu}^{(T)}$ can be compressed on 
average to at least $T H(Q)$ bits \cite{CoverThomasBook}, where $H(Q)$ is the 
entropy rate of Markov chain with transition matrix $Q$:
\begin{equation}
    H(Q) = -\sum_{i \in \mc{X}} \pi(Q)_i \sum_{j \in \mc{X}} Q_{ij} 
    \log(Q_{ij})
\end{equation}
Encoding samples from distribution $p_{Q,\mu}^{(T)}$ with respect to a 
mismatched code based on distribution $p_{P,\nu}^{(T)}$ incurs an average cost 
per unit time of at least $D(P||Q)$ extra bits.

Maximizing the lower bound on power (\ref{eq:power_lower_bound_free_energy_mc}) 
over transition matrices $P$ with target stationary distribution $\pi^*$ yields 
the optimization problem (\ref{eq:optimization_problem_in_introduction}):
\begin{equation}
    P^* = \argmin_{P: \pi^* P = \pi^*} D(P||Q) 
    \label{eq:optimization_problem_in_MC}
\end{equation}

\subsection{Log loss action cost} \label{subsec:log_loss_action_cost}

Another path to the same optimization problem 
(\ref{eq:optimization_problem_in_MC}) (KL divergence as a lower bound on the 
work to maintain a nonequilibrium distribution) is in terms of a Markov 
decision process with log loss action cost, a setting introduced by Todorov 
\cite{Todorov06,Todorov09}.  Let $Q$ be the uncontrolled chain and let $P$ be 
the controlled chain.  Let $c(i, j, P, Q)$ be the \textit{microscopic cost} 
paid when a transition is made from $X_t = i$ to $X_{t+1} = j$ when the 
controller chooses transition probability matrix $P$.  KL divergence control 
amounts to using the log likelihood ratio:
\begin{equation}
    c(i, j, P, Q) \eqdef \log\left(\frac{P_{ij}}{Q_{ij}}\right)
\end{equation}
If we view the rows of $P$ and $Q$ as Boltzmann distributions with different 
energy potentials -- that is, if we choose energy functions $E_i(\cdot)$, 
$E'_i(\cdot)$ such that $P_{ij} = e^{-E_i(j)}$ and $Q_{ij} = e^{-E'_i(j)}$ -- 
then the microscopic cost is the difference in energies: $c(i,j,P,Q) = E'_i(j) 
- E_i(j)$.

If $X_t \sim \mu$, then let the \textit{cost} $D_\mu(P||Q)$
be the expected microscopic cost (the average cost paid per microscopic 
system):
\begin{align}
    D_\mu(P||Q) &\eqdef \mb{E}_{\mu, P}(c(X_t, X_{t+1}, P, Q)) \\
    &= \sum_{i \in \mc{X}} \mu_i \sum_{j \in \mc{X}} P_{ij}\log 
    \left(\frac{P_{ij}}{Q_{ij}}\right) \label{eq:cost_disc_time_KL_div}
\end{align}
Thus 
$D_\mu(P||Q)$ is a $\mu$-weighted KL divergence between the rows of transition 
matrices $P$ and $Q$.  We are interested in macroscopic control of $\mu_t$, 
rather than microscopic control of $X_t$, so our setup differs from the setting 
introduced by \cite{Todorov06,Todorov09}: we average the control cost over 
$\mu_t$, so there is no randomness in our setting.  Finally for irreducible, 
aperiodic transition matrix $P$, we have the identity
\begin{equation}
    D_{\pi(P)}(P||Q) = D(P||Q) \label{eq:def_KL_div_MC_discrete_time}
\end{equation}
where $\pi(P)$ is the stationary distribution of $P$.  Minimizing the cost 
$D_{\pi^*}(P||Q)$ with respect to $P$ such that $\pi^* P = \pi^*$ is 
optimization problem (\ref{eq:optimization_problem_in_MC}).

\subsection{Markov chains in continuous time} \label{sec:MC_continuous_time}

The setup of section \ref{sec:MC_discrete_time} has a natural counterpart for 
continuous-time Markov chains.  Let $\bar{Q} = (\bar{Q}_{ij})_{i,j \in \mc{X}}$ 
denote the transition rate matrix of the uncontrolled continuous-time Markov 
chain $(X_t)_{t \in \mb{R}_{\geq 0}}$, where henceforth the overbar notation 
corresponds to rate matrices.  Let $\bar{P}$ be the controlled rate matrix and 
let $X_t \sim \mu_t$.  Then
\begin{equation}
    \frac{d}{dt} \mu_t = \left(\frac{d}{dt} \mb{P}_{\bar{P}}(X_t = i)\right)_{i 
    \in \mc{X}} = \mu_t \bar{P} \Rightarrow \mu_t = \mu_0 e^{\bar{P} t} 
    \label{eq:controlled_state_continuous_time}
\end{equation}
where $e^{\bar{P}t}$ denotes the matrix exponential.  Note that every rate 
matrix $\bar{P}$ satisfies $\bar{P}_{ij} \geq 0$ for $i \neq j$ and 
$\bar{P}_{ii} = -\sum_{j\in\mc{X}:j\neq i} \bar{P}_{ij} \leq 0$, so the row 
sums of $\bar{P}$ are 0.  Conditioned on $X_t = i$, the time until the next 
jump is exponentially-distributed with a mean of $-1/\bar{P}_{ii}$, and the 
probability to jump to $j$ is proportional to $\bar{P}_{ij}$ for $i \neq j$.

The natural notion of KL divergence rate $D(\bar{P}||\bar{Q})$ between 
transition rate matrices $\bar{P}$ and $\bar{Q}$ is 
\cite{Kesidis93,BaldiPiccioni98,deLaFortelle00, BertiniFaggionatoGabrielli14} 
the limiting log likelihood ratio, analogous to 
(\ref{eq:KL_div_MC_discrete_time_expression}):
\begin{align}
    D(\bar{P}\|\bar{Q}) &\eqdef \lim_{T \ra \infty} \frac{1}{T} 
    \mb{E}_{p_{\bar{P},\mu_0}^{(T)}} 
    \log\left(\frac{p_{\bar{P},\mu_0}^{(T)}((X_t)_{t \in 
    [0,T]})}{p_{\bar{Q},\nu_0}^{(T)}((X_t)_{t \in [0,T]})}\right) \\
    &= \sum_{i \in \mc{X}} \pi(\bar{P})_i \sum_{j \in \mc{X}: j \neq i} 
    \left(\bar{Q}_{ij} - \bar{P}_{ij} + \bar{P}_{ij} 
    \log\left(\frac{\bar{P}_{ij}}{\bar{Q}_{ij}}\right)\right) 
    \label{eq:KL_div_MC_continuous_time_expression}
\end{align}
where $p_{\bar{P},\mu_0}^{(T)}(X_0^T)$ denotes the likelihood under rate matrix 
$\bar{P}$ and initial distribution $\mu_0$, and where $\pi(\bar{P})$ is the 
stationary distribution of rate matrix $\bar{P}$~\footnote{That is, 
$\pi(\bar{P}) \bar{P} = (0)_{i \in \mc{X}}$.  Equivalently, $\pi(\bar{P}) 
e^{\bar{P} t} = \pi(\bar{P})$ for all $t \geq 0$.}.  The quantity in the second 
summation in (\ref{eq:KL_div_MC_continuous_time_expression}) is the KL 
divergence between two Poisson distributions with means $\bar{P}_{ij}$ and 
$\bar{Q}_{ij}$.

The optimization problem analogous to (\ref{eq:optimization_problem_in_MC}) in 
continuous time is:
\begin{equation}
    \bar{P}^* = \argmin_{\bar{P}: \pi(\bar{P}) = \pi^*} D(\bar{P}||\bar{Q}) 
    \label{eq:optimization_problem_in_MC_continuous_time}
\end{equation}

\subsection{Comparison to prior work} \label{subsec: comparison}

Recent work \cite{HorowitzZhouEngland17,HorowitzEngland17} considers the 
question of the minimum power used to maintain a nonequilibrium state.  Their 
setting uses a different notion of cost than we do and also makes some 
restrictions about the base and controlled chains (they work in continuous 
time, assume that the base chain $\bar{Q}$ is reversible, and only allow 
controlled chains $\bar{P}$ such that $\bar{P}_{ij} \geq \bar{Q}_{ij}$ for all 
$i \neq j$ -- this corresponds to the biochemical mechanism of adding transitions with non-negative rates).  \cite{HorowitzZhouEngland17,HorowitzEngland17} minimize the 
entropy production rate among all controlled chains with the desired target 
distribution $\pi^*$ and find that ``fast control is optimal'': there is in 
general no optimally controlled chain, but given any chain $\bar{P}$ that has 
the target distribution $\pi^*$, we can come arbitrarily close to the minimum 
entropy production bound by speeding up $\bar{P}$ arbitrarily 
much\footnote{That is, using $c \bar{P} = (c \bar{P}_{ij})_{i,j\in\mc{X}}$ as 
the controlled chain and letting $c \ra \infty$. This corresponds to the 
statement in \cite{HorowitzZhouEngland17} than the ``added edges (should) 
operate much faster than the equilibrium transitions''.} (while incurring an 
arbitrarily large KL divergence cost $D(\bar{P}||\bar{Q})$ according to our 
metric).

The difference between the two notions of cost -- KL divergence rate in this 
work and entropy production rate in 
\cite{HorowitzZhouEngland17,HorowitzEngland17} is the difference between total 
energy used and the efficiency with which that energy is used as measured by 
entropy production rate.  The very fast controlled chain of 
\cite{HorowitzZhouEngland17,HorowitzEngland17} uses a lot of energy 
efficiently, while our chain $\bar{P}^*$ minimizes energy use by the 
controller, but not the efficiency.  A consequence of this difference is that 
our optimal controlled chain $\bar{P}^*$ 
(\ref{eq:optimization_problem_in_MC_continuous_time}) depends on the 
uncontrolled chain $\bar{Q}$ (see section \ref{sec:minimum_power}), while the 
very fast close-to-optimal chain of 
\cite{HorowitzZhouEngland17,HorowitzEngland17} does not, except in the 
requirement that it be much faster than $\bar{Q}$.

\cite{Todorov06,Todorov09} introduced the KL divergence control setting.  
\cite{Todorov06,Todorov09} uses ``microscopic'' control cost, assigning a cost 
to a trajectory rather than a distribution over trajectories.  Similarly, the 
control goal in \cite{Todorov06,Todorov09} is microscopic (to reach a certain 
subset of the state space $\mc{X}$), rather than macroscopic (to maintain a 
target distribution $\pi^*$ over $\mc{X}$).

\cite{Gopalkrishnan16} considers the problem of erasing a bit of information 
encoded in the stationary distribution of a two-state continuous time Markov 
chain and uses the KL divergence to measure the cost of control, as does our 
work.  Whereas our control goal is to hold a target distribution $\pi^*$ and 
minimize the cost per unit time, \cite{Gopalkrishnan16}'s control goal is to 
have (in the notation of section \ref{sec:MC_continuous_time}) $\mu_T = \pi^*$ 
by a fixed time $T$ and to minimize the total cost used to achieve this.  
Consequently, \cite{Gopalkrishnan16} uses a time-varying controlled chain, 
while ours is constant in time.

\section{Minimum-power controlled chains} \label{sec:minimum_power}

In this section we minimize the power used to hold a desired nonequilibrium 
stationary distribution $\pi^*$. 

\begin{theorem} (Minimum-power chain) \label{thm:I_proj_stat_distrib} Let 
    $\pi^*$ be a distribution on finite set $\mc{X}$ with $\pi^*_i > 0$ for all 
    $i \in \mc{X}$.  Let $\pi(P)$ and $\pi(\bar{P})$ denote the stationary 
    distributions of discrete- and continuous-time chains $P$ and $\bar{P}$, 
    respectively.
    \begin{enumerate}
        \item (Discrete time)
            Let $Q$ be an irreducible, aperiodic transition probability matrix 
            (the uncontrolled discrete-time chain), let $\pi = \pi(Q)$, and let 
            $P^*$ denote the minimum-power controlled chain with the desired 
            stationary distribution $\pi^*$:
    \begin{equation}
        P^* = \argmin_{P: \pi(P) = \pi^*} D(P||Q) 
        \label{eq:def_I_proj_stat_distrib_disc_time}
    \end{equation}
    where the minimum is over all transition probability matrices with the 
            desired stationary distribution and where $D(P||Q)$ is as defined 
            in (\ref{eq:def_KL_div_MC_discrete_time_1}).  Then $P^*$ exists, is 
            unique, and satisfies for all $i,j\in\mc{X}$:
    \begin{equation}
        P^*_{ij} = Q_{ij} e^{\lambda_i + \eta_j} 
        \label{eq:I_proj_stat_distrib_dic_time_element}
    \end{equation}
    where $(\lambda_i)_{i \in \mc{X}}, (\eta_i)_{i \in \mc{X}}$ are real-valued 
            constants satisfying the recursive relations:
    \begin{align}
        \lambda_i &= - \log\bigg(\sum_{j \in \mc{X}} Q_{ij} e^{\eta_j}\bigg) 
        \label{eq:eta_disc_time} \\
        \eta_i &= \log(\pi^*_i) - \log\bigg(\sum_{j \in \mc{X}} \pi_j Q_{ji} 
        e^{\lambda_j}\bigg) \label{eq:lambda_disc_time}
    \end{align}
\item (Continuous time)
    Let $\bar{Q}$ be a transition rate matrix (the uncontrolled continuous-time 
            chain) with $e^{\bar{Q}}$ irreducible and aperiodic, and let 
            $\bar{P}^*$ denote the minimum-power controlled chain with the 
            desired stationary distribution $\pi^*$:
    \begin{equation}\label{eq:minpowerchain}
        \bar{P}^* = \argmin_{\bar{P}: \pi(\bar{P}) = \pi^*} D(\bar{P}||\bar{Q})
    \end{equation}
    where the minimum is over all transition rate matrices with the desired 
            stationary distribution and where $D(\bar{P}||\bar{Q})$ is as 
            defined in (\ref{eq:KL_div_MC_continuous_time_expression}).  Then 
            $\bar{P}^*$ exists, is unique, and
    satisfies:
    \begin{equation}
        \bar{P}^*_{ij} = \begin{cases}
            \bar{Q}_{ij} e^{\lambda_i - \lambda_j} &: i \neq j \\
            -\sum_{j \in \mc{X}: j \neq i} \bar{P}^*_{ij} &: i = j 
            \label{eq:def_I_proj_stat_distrib_cont_time}
        \end{cases}
    \end{equation}
    where $(\lambda_i)_{i \in \mc{X}}$ are real-valued constants satisfying the 
            recursive relations:
    \begin{align}
        \lambda_i = \frac{1}{2} \log\left(\frac{\sum_{j \in \mc{X}: j \neq i} 
        \pi^*_j \bar{Q}_{ji} e^{\lambda_j}}{\sum_{j \in \mc{X}: j \neq i} 
        \pi^*_i \bar{Q}_{ij} e^{-\lambda_j}}\right) \label{eq:lambda_cont_time}
    \end{align}
    \end{enumerate}
\end{theorem}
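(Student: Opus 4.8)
The plan is to recognize \eqref{eq:def_I_proj_stat_distrib_disc_time} and \eqref{eq:minpowerchain} as \emph{convex} programs and to read the characterization off their Karush--Kuhn--Tucker (KKT) stationarity conditions. The key observation is that, although $D(\cdot\|Q)$ depends on $P$ nonlinearly through the stationary distribution $\pi(P)$, once we impose the linear constraint $\pi^*P=\pi^*$ the objective collapses to $g(P):=\sum_i\pi^*_i\sum_j P_{ij}\log(P_{ij}/Q_{ij})$, which is strictly convex in the free entries of $P$ (its Hessian is diagonal with entries $\pi^*_i/P_{ij}>0$); similarly, imposing $\pi^*\bar P=0$ collapses the continuous-time cost to $\bar g(\bar P):=\sum_i\pi^*_i\sum_{j\neq i}\bigl(\bar Q_{ij}-\bar P_{ij}+\bar P_{ij}\log(\bar P_{ij}/\bar Q_{ij})\bigr)$, again strictly convex in the off-diagonal entries. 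The feasible sets are then cut out by linear constraints only: stochasticity ($\sum_j P_{ij}=1$, resp.\ the rate-matrix convention $\bar P_{ii}=-\sum_{j\neq i}\bar P_{ij}$), the stationarity equations $\sum_i\pi^*_i P_{ij}=\pi^*_j$ (resp.\ the equivalent $\sum_{i\neq j}\pi^*_i\bar P_{ij}=\pi^*_j\sum_{k\neq j}\bar P_{jk}$), nonnegativity, and the absolute-continuity requirement $P_{ij}=0$ whenever $Q_{ij}=0$, which is needed for finite cost.

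\textbf{Existence and uniqueness.} In discrete time the feasible set is a nonempty compact convex polytope (nonemptiness uses irreducibility of $Q$ and $\pi^*_i>0$) and $g$ is continuous on it under the convention $0\log 0=0$, so a minimizer exists; strict convexity in the free coordinates makes it unique. In continuous time the feasible set is closed and convex but \emph{not} bounded, since rates may be arbitrarily large, so compactness must be replaced by coercivity: each summand of $\bar g$ grows like $\bar P_{ij}\log\bar P_{ij}$, hence $\bar g\to\infty$ along every unbounded feasible sequence and a unique minimizer again exists. In both cases $\partial g/\partial P_{ij}=\pi^*_i\bigl(\log(P_{ij}/Q_{ij})+1\bigr)\to-\infty$ as $P_{ij}\downarrow 0$ (and likewise for $\bar g$), so the minimizer has $P^*_{ij}>0$ for every $(i,j)$ with $Q_{ij}>0$; thus $P^*$ has the same support as $Q$ and is therefore irreducible — so $\pi^*$ is genuinely its unique stationary distribution — the nonnegativity constraints are inactive at the optimum, and only the linear equality constraints enter the KKT system.

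\textbf{Characterization.} Writing the Lagrangian with one multiplier per row-sum constraint and one per stationarity constraint and setting $\partial/\partial P_{ij}=0$ shows, in discrete time, that $\log(P^*_{ij}/Q_{ij})$ equals a term depending only on $i$ plus a term depending only on $j$, i.e.\ the product form $P^*_{ij}=Q_{ij}e^{\lambda_i+\eta_j}$ of \eqref{eq:I_proj_stat_distrib_dic_time_element}; substituting this back into the row-sum constraints solves for $e^{\lambda_i}$ and yields \eqref{eq:eta_disc_time}, and substituting it into the stationarity constraints solves for $e^{\eta_j}$ and yields \eqref{eq:lambda_disc_time}. In continuous time the same computation gives, for $i\neq j$, a difference of a term in $i$ and a term in $j$, written $\bar P^*_{ij}=\bar Q_{ij}e^{\lambda_i-\lambda_j}$ with the diagonal fixed by the rate-matrix convention, which is \eqref{eq:def_I_proj_stat_distrib_cont_time}; substituting this into $\sum_{i\neq j}\pi^*_i\bar P^*_{ij}=\pi^*_j\sum_{k\neq j}\bar P^*_{jk}$ and solving the resulting identity for $e^{2\lambda_j}$ produces \eqref{eq:lambda_cont_time}. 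The potentials are pinned down only up to a gauge shift ($\lambda_i\to\lambda_i+c$, compensated in the discrete case by $\eta_j\to\eta_j-c$) that leaves $P^*$, resp.\ $\bar P^*$, unchanged, consistent with the uniqueness already established; conversely, solvability of the coupled recursions is guaranteed by the existence of the minimizer.

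\textbf{Main obstacle.} The conceptual crux is the reduction in the first step — recognizing that the non-convex dependence of $D(\cdot\|Q)$ on the stationary distribution disappears on the constraint surface, so that each problem is a genuinely convex program with linear constraints. The remaining technical care is in existence: the discrete-time feasible polytope must be shown nonempty, while the continuous-time feasible set is unbounded so compactness has to be traded for coercivity; one also leans on the steepness of $x\log(x/q)$ at $x=0$ to exclude zero entries at the optimum, which is precisely what makes the KKT system involve only equality constraints and thereby deliver the clean exponential forms \eqref{eq:I_proj_stat_distrib_dic_time_element} and \eqref{eq:def_I_proj_stat_distrib_cont_time}. As a cross-check, these minimizers coincide with the I-projections that arise in the large-deviations analysis of empirical measures of Markov chains, giving an independent derivation of the same formulas.
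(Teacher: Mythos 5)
Your proposal is essentially the paper's own argument: a Lagrange-multiplier/KKT computation on the convex reformulation of the problem (the paper parametrizes by the joint law $N[i,j]=\pi^*_i P_{ij}$ rather than by $P_{ij}$ directly, an immaterial change of variables), with the product form read off the stationarity conditions and the recursions obtained by substituting back into the constraints. You are in fact more complete than the paper on existence, uniqueness, and the inactivity of the nonnegativity constraints, which the paper delegates to \cite{CsiszarCoverChoi87} and \cite{BaldiPiccioni98}; the one caveat is that nonemptiness of the finite-cost feasible set does \emph{not} follow from irreducibility of $Q$ and $\pi^*_i>0$ alone (e.g., a chain supported on a single directed cycle plus one self-loop admits only a one-parameter family of stationary distributions among matrices absolutely continuous with respect to $Q$), so feasibility must be treated as an implicit hypothesis, as the paper also tacitly does.
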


Existence and uniqueness of $P^*$ (\ref{eq:def_I_proj_stat_distrib_disc_time}) 
follow as a special case of Lemma 1 of \cite{CsiszarCoverChoi87}.  Existence 
and uniqueness of $\bar{P}^*$ (\ref{eq:def_I_proj_stat_distrib_cont_time}) were 
shown in \cite{BaldiPiccioni98}.  We prove expressions 
(\ref{eq:I_proj_stat_distrib_dic_time_element}) and 
(\ref{eq:def_I_proj_stat_distrib_cont_time}) for $P^*_{ij}$ and 
$\bar{P}^*_{ij}$
by setting up a Lagrange multiplier optimization problem, where $\alpha, 
(\lambda_i)_{i \in \mc{X}}, (\eta_i)_{i \in \mc{X}}$ are Lagrange multipliers.  
See Appendix \ref{app:proof} for proof details.


The recursive relations (\ref{eq:eta_disc_time}), (\ref{eq:lambda_disc_time}), 
and (\ref{eq:lambda_cont_time}) enable an iterative computation of the chains 
$P^*$ (\ref{eq:def_I_proj_stat_distrib_disc_time})
and $\bar{P}^*$ (\ref{eq:def_I_proj_stat_distrib_cont_time}).  In the 
continuous-time case, for example, we initialize $(\lambda_i^{(0)})_{i \in 
\mc{X}}$ to some value and then use relation (\ref{eq:lambda_cont_time}) to 
compute $\lambda^{(t)}_i$ as a function of $(\lambda^{(t-1)}_i)_{i \in \mc{X}}$ 
at the $t$-th iteration until numerical convergence.  The example in section 
\ref{sec:examples} is computed in this way.

The chains $P^*$ and $\bar{P}^*$ are the I-projections of the chains $P$ and 
$\bar{P}$ on the set of discrete- and continuous-time Markov chains, 
respectively, with a fixed stationary distribution.  The discrete-time case is 
presented in \cite{DemboZeitouniBook,CsiszarCoverChoi87} and the 
continuous-time case in \cite{BaldiPiccioni98}, where $D(P^*||Q)$ and 
$D(\bar{P}^*||\bar{Q})$ arise as large deviations rate functions for the 
empirical marginal distribution.
To our knowledge Theorem \ref{thm:I_proj_stat_distrib} presents the most 
explicit characterization of the I-projection in terms of the Lagrange 
multipliers.

We next specialize our results to the case of reversible uncontrolled Markov 
chains, a case important in equilibrium thermodynamics.  
Let $Q\rev = (Q\rev_{ij})_{i,j\in\mc{X}}$ denote the \textit{time-reverse} of a 
transition probability matrix $Q$.  That is:
\begin{equation}
    Q\rev_{ij} = \frac{\pi_j}{\pi_i} Q_{ji} \label{eq:time_rev_chain_disc_time}
\end{equation}
where $\pi = \pi(Q) = \pi(Q\rev)$ is the stationary distribution.  Then 
$Q\rev_{ij} = \lim_{t \ra \infty} \mb{P}_Q(X_t=j|X_{t+1}=i)$; if $X_0 \sim 
\pi$, then $Q\rev_{ij} = \mb{P}_Q(X_t=j|X_{t+1}=i)$ for all $t$.   Analogously, 
in continuous time, the time-reverse $\bar{Q}\rev$ of a transition rate matrix 
$\bar{Q}$ satisfies $\bar{Q}_{ij} = \frac{\pi_j}{\pi_i} \bar{Q}_{ji}$ for all 
$i,j\in\mc{X}$.  A chain $Q$ is \textit{reversible} if $Q = Q\rev$ 
(analogously, $\bar{Q} = \bar{Q}\rev$ in continuous time).  

\begin{theorem} \label{thm:I_proj_stat_distrib_reversible_base_chain} 
    (Reversible uncontrolled chain) Let notation be as in the statement of 
    Theorem \ref{thm:I_proj_stat_distrib}.
    \begin{enumerate}
        \item If the uncontrolled transition probability matrix $Q$ is 
            reversible, then so is the minimum-power chain $P^*$ 
            (\ref{eq:def_I_proj_stat_distrib_disc_time}).  Analogously, in 
            continuous time, if the uncontrolled transition rate matrix 
            $\bar{Q}$ is reversible, then so is $\bar{P}^*$ 
            (\ref{eq:def_I_proj_stat_distrib_cont_time}).
        \item In discrete time, the Lagrange multipliers 
            (\ref{eq:eta_disc_time}), (\ref{eq:lambda_disc_time}) satisfy 
            $\eta_i = \lambda_i + \log(\pi^*_i/\pi_i)$ for all $i \in \mc{X}$.
        \item In continuous time, $\bar{P}^*$ satisfies for $i \neq j$:
            \begin{equation}
                \bar{P}^*_{ij} = \bar{Q}_{ij} \sqrt{\frac{\pi_i \pi^*_j}{\pi_j 
                \pi^*_i}} \label{eq:reversible_cont_time}
            \end{equation}
            and
            \begin{equation}
                D(\bar{P}^*||\bar{Q}) = \frac{1}{2}\sum_{i,j\in\mc{X}:i\neq j} 
                \left(\sqrt{\pi^*_i \bar{Q}_{ij}} - \sqrt{\pi^*_j 
                \bar{Q}_{ji}}\right)^2 \label{eq:KL_div_rev_cont_time}
            \end{equation}
    \end{enumerate}
\end{theorem}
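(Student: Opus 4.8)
The plan is to deduce the whole theorem from Theorem~\ref{thm:I_proj_stat_distrib} — the explicit form of the minimizers and their existence and uniqueness — together with one structural fact: the KL divergence rate is invariant under \emph{simultaneous} time-reversal. Precisely, if $P$ has stationary distribution $\rho$ and $Q$ has stationary distribution $\pi$, and $P\rev,Q\rev$ denote their time-reverses as in (\ref{eq:time_rev_chain_disc_time}), then $D(P\rev\|Q\rev)=D(P\|Q)$, and the analogue holds in continuous time for the rate (\ref{eq:KL_div_MC_continuous_time_expression}). I would prove this by expanding $D(P\rev\|Q\rev)$ from the definition, relabelling the summation index $i\leftrightarrow j$, and noting that the extra factors $\rho_i\pi_j/(\rho_j\pi_i)$ produced by the reversal contribute $\sum_{i,j}\rho_i P_{ij}\log\frac{\rho_i\pi_j}{\rho_j\pi_i}$, which equals $\sum_i\rho_i h_i-\sum_j\rho_j h_j=0$ with $h_k=\log(\rho_k/\pi_k)$, using $\sum_j P_{ij}=1$ and $\sum_i\rho_i P_{ij}=\rho_j$ (and likewise in continuous time, using that rate-matrix rows sum to $0$ and that $\pi^*$, $\pi$ are stationary). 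The argument below establishes part~1 first and then parts~2 and~3.

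For part~1, apply this identity with $P=P^*$, $\rho=\pi^*$, and $Q$ the reversible uncontrolled chain, so that $Q\rev=Q$ and hence $D({P^*}\rev\|Q)=D(P^*\|Q)$. Since time-reversal preserves the stationary distribution, ${P^*}\rev$ is again feasible for (\ref{eq:def_I_proj_stat_distrib_disc_time}) and attains the minimum, so the uniqueness in Theorem~\ref{thm:I_proj_stat_distrib} forces ${P^*}\rev=P^*$, i.e.\ $\pi^*_i P^*_{ij}=\pi^*_j P^*_{ji}$ for all $i,j$: the minimum-power chain is reversible. The continuous-time statement is identical, with $\bar{Q}\rev=\bar{Q}$ and the minimizer (\ref{eq:minpowerchain}).

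For parts~2 and~3, I would combine reversibility of the optimizer with its form from Theorem~\ref{thm:I_proj_stat_distrib}. In discrete time, dividing $\pi^*_i P^*_{ij}=\pi^*_j P^*_{ji}$ through by $\pi_i Q_{ij}=\pi_j Q_{ji}$ and substituting $P^*_{ij}=Q_{ij}e^{\lambda_i+\eta_j}$ gives $\frac{\pi^*_i}{\pi_i}e^{\lambda_i}e^{\eta_j}=\frac{\pi^*_j}{\pi_j}e^{\lambda_j}e^{\eta_i}$, so $\pi_i e^{\eta_i}/(\pi^*_i e^{\lambda_i})$ is independent of $i$; absorbing that constant into the gauge freedom $\lambda_i\mapsto\lambda_i+c$, $\eta_j\mapsto\eta_j-c$ of the multipliers yields $\eta_i=\lambda_i+\log(\pi^*_i/\pi_i)$, which is part~2. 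The same manipulation in continuous time, applied to $\pi^*_i\bar{P}^*_{ij}=\pi^*_j\bar{P}^*_{ji}$ with $\bar{P}^*_{ij}=\bar{Q}_{ij}e^{\lambda_i-\lambda_j}$ and $\pi_i\bar{Q}_{ij}=\pi_j\bar{Q}_{ji}$, gives $e^{2(\lambda_i-\lambda_j)}=\frac{\pi_i\pi^*_j}{\pi_j\pi^*_i}$, hence, after the gauge choice $\lambda_i=\tfrac12\log(\pi_i/\pi^*_i)$, formula (\ref{eq:reversible_cont_time}). For (\ref{eq:KL_div_rev_cont_time}) I would substitute (\ref{eq:reversible_cont_time}) into (\ref{eq:KL_div_MC_continuous_time_expression}), note that $\log(\bar{P}^*_{ij}/\bar{Q}_{ij})=\lambda_i-\lambda_j$, and split the sum: the term $\sum_i\pi^*_i\sum_{j\neq i}\bar{P}^*_{ij}(\lambda_i-\lambda_j)$ vanishes because $\pi^*_i\bar{P}^*_{ij}$ is symmetric in $(i,j)$ by part~1 while $\lambda_i-\lambda_j$ is antisymmetric, leaving $\sum_{i\neq j}(\pi^*_i\bar{Q}_{ij}-\pi^*_i\bar{P}^*_{ij})$. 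Using reversibility of $\bar{Q}$ one checks $\pi^*_i\bar{P}^*_{ij}=\sqrt{\pi^*_i\bar{Q}_{ij}}\,\sqrt{\pi^*_j\bar{Q}_{ji}}$, and symmetrizing $\sum_{i\neq j}\pi^*_i\bar{Q}_{ij}=\tfrac12\sum_{i\neq j}(\pi^*_i\bar{Q}_{ij}+\pi^*_j\bar{Q}_{ji})$ then completes the square to give (\ref{eq:KL_div_rev_cont_time}).

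The hard part will not be any single calculation — each is short — but (i) spotting and cleanly proving the simultaneous time-reversal invariance $D(P\rev\|Q\rev)=D(P\|Q)$, which is what makes part~1 immediate, and (ii) keeping track of the gauge freedom in the Lagrange multipliers; this is why I would route parts~2 and~3 through the reversibility of the optimizer and the explicit form from Theorem~\ref{thm:I_proj_stat_distrib}, rather than verifying an ansatz for $(\lambda_i,\eta_i)$ directly against the fixed-point relations (\ref{eq:eta_disc_time})--(\ref{eq:lambda_cont_time}) — that route also works but is fiddlier. The only step in the continuous-time computation that is not pure bookkeeping is recognizing that the $\log$-term in (\ref{eq:KL_div_MC_continuous_time_expression}) becomes the antisymmetric quantity $\lambda_i-\lambda_j$, which therefore pairs against the symmetric quantity $\pi^*_i\bar{P}^*_{ij}$ to give zero.
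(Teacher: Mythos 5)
Your proof is correct. For parts 1 and 2 it takes essentially the paper's own route: the paper's proof of part 1 also rests on the invariance $D(P\rev\|Q) = D(P\|Q)$ for reversible $Q$ and offers two ways to conclude --- a strict-convexity argument via the symmetrization $(P^*+P^{*\text{r}})/2$, and the uniqueness argument you chose --- and its part 2 is the same ``divide the detailed-balance relations and collect $i$- and $j$-dependent terms'' manipulation, with the same appeal to the additive gauge freedom in $(\lambda_i,\eta_i)$ (the paper phrases this as ``choosing $a=1$''). Your general statement $D(P\rev\|Q\rev)=D(P\|Q)$ is a mild strengthening of the lemma the paper checks, and your telescoping verification (including the continuous-time version, which needs both zero row sums and stationarity of $\pi$ and $\pi^*$) is sound. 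The genuine difference is part 3: the paper does not prove it at all, instead citing \cite{BaldiPiccioni98} for (\ref{eq:reversible_cont_time}) and \cite{BertiniFaggionatoGabrielli14} for (\ref{eq:KL_div_rev_cont_time}), whereas you derive both from first principles --- detailed balance for $\bar{P}^*$ plus the form $\bar{P}^*_{ij}=\bar{Q}_{ij}e^{\lambda_i-\lambda_j}$ pins down $e^{\lambda_i-\lambda_j}=\sqrt{\pi_i\pi^*_j/(\pi_j\pi^*_i)}$ (no gauge choice is even needed for the ratio), and the symmetric-times-antisymmetric cancellation of the $\log$ term followed by symmetrizing $\sum_{i\neq j}\pi^*_i\bar{Q}_{ij}$ and the identity $\pi^*_i\bar{P}^*_{ij}=\sqrt{\pi^*_i\bar{Q}_{ij}}\sqrt{\pi^*_j\bar{Q}_{ji}}$ completes the square. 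That self-contained derivation is a real addition over what the paper provides.
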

\begin{proof}
    1) We can check that if $Q = Q\rev$, then $D(P||Q) = D(P\rev||Q)$ for all 
    $P$.  Suppose that $P^*$ is not reversible.  Let $P^{\leftrightarrow} 
    \eqdef (P^*+P^{*\text{r}})/2$.  Then $P^{\leftrightarrow}$ is reversible 
    and $\pi(P^{\leftrightarrow}) = \pi(P^*) = \pi^*$.  Since $D(P||Q)$ is 
    strictly convex in $P$, we have $D(P^{\leftrightarrow}||Q) < (D(P^*||Q) + 
    D(P^{*\text{r}}||Q))/2 = D(P^*||Q)$, contradicting the optimality of $P^*$.  
    Another proof: suppose that $P^*$ is not reversible, then $D(P^*||Q) = 
    D(P^{*\text{r}}||Q)$, contradicting the uniqueness of $P^*$ established in 
    Theorem \ref{thm:I_proj_stat_distrib}.  Therefore $P^*$ is reversible.  An 
    analogous argument proves $\bar{P}^*$ is reversible.  2) follows by using 
    time reversal (\ref{eq:time_rev_chain_disc_time}) twice along with the 
    reversibility of $Q$ and $P^*$, established in part 1):
    \begin{align}
        \pi^*_i Q_{ij}e^{\lambda_i + \eta_j} &= \pi^*_i P^*_{ij} = \pi^*_j 
        P^*_{ji} = \pi^*_j Q_{ji} e^{\lambda_j + \eta_i} \\
        &= \pi^*_j \frac{\pi_i}{\pi_j} Q_{ij} e^{\lambda_j + \eta_i}
    \end{align}
    and collecting $i$- and $j$-dependent terms to separate sides of the 
    equality to conclude that $\frac{\pi^*_i}{\pi_i} e^{\lambda_i - \eta_i} = 
    a$ for some constant $a$ for all $i \in \mc{X}$.  Choosing $a=1$ yields the 
    result.

    3) Expression (\ref{eq:reversible_cont_time}) for $\bar{P}^*$ is derived in 
    \cite{BaldiPiccioni98} and $D(\bar{P}^*||Q)$ 
    (\ref{eq:KL_div_rev_cont_time}) is derived as a large deviations rate 
    function in \cite{BertiniFaggionatoGabrielli14}.
\end{proof}

Part 2) of Theorem \ref{thm:I_proj_stat_distrib_reversible_base_chain} lets us 
simplify computation of $P^*$ (\ref{eq:def_I_proj_stat_distrib_disc_time}) 
somewhat when the uncontrolled chain $Q$ is reversible.

\section{Examples} \label{sec:examples}

%
%

We conclude with numerical examples of minimum-power controlled Markov chains 
with a target stationary distribution.  The first example is a two-state chain 
in discrete time and the second example is a birth-and-death chain in 
continuous and discrete time -- a toy model of the muscle fiber thought 
experiment in the introduction (section \ref{sec:introduction}).

\subsection{Two-state chain in discrete time}

Let $Q$ be a two-state discrete-time Markov chain on set $\mc{X} = \{1,2\}$ and 
let $\pi^* = (\pi^*_1, 1-\pi^*_1)$ be our desired nonequilibrium distribution 
with $\pi^*_1 \in (0,1)$.  All two-state chains are reversible, so we apply 
Theorem \ref{thm:I_proj_stat_distrib_reversible_base_chain} part 2) to compute 
the minimum-power controlled chain with stationary distribution $\pi^*$.  A 
computation shows the $2 \times 2$ minimum-power transition matrix $P^*$ 
(\ref{eq:def_I_proj_stat_distrib_disc_time}) has off-diagonal entries:
    \begin{equation}
        P^*_{ij} = \frac{1}{\pi^*_i} \cdot \frac{1-\sqrt{1-4 \pi^*_1 
        (1-\pi^*_1) s}}{2 s} \ : i \neq j
    \end{equation}
    where the second factor is independent of $i,j$ and
    \begin{equation}
        s \eqdef \frac{1}{Q_{1,2}} + \frac{1}{Q_{2,1}} - \frac{1}{Q_{1,2} 
        Q_{2,1}}
    \end{equation}
    The diagonal terms of $P^*$ are such that the row sums are $1$. 

\subsection{Birth-and-death chain}


We next present the example of the birth-and-death chain as a toy model of 
Feynman's muscle fiber thought experiment (see section \ref{sec:introduction}).  
For detailed models of molecular motors see \cite{Julicher} and \cite{Qian00}.

\subsubsection{Continuous time} \label{sec:cont_time_birth_and_death_chain}

\begin{figure}[!t]
    \centering
    \includegraphics[width=3in]{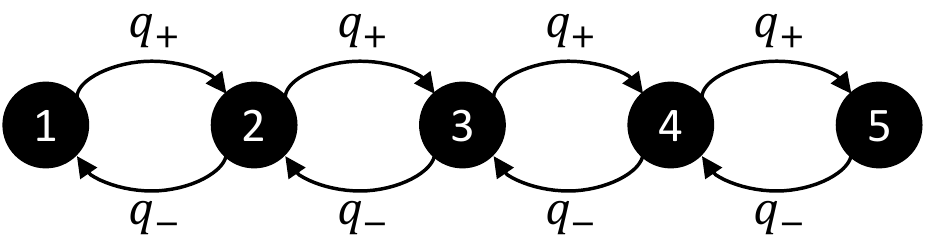}
    \caption{Uncontrolled chain $Q$ (a toy model for the slack arm pulled down 
    by gravity, with smaller-index states closer to the ground and rates $q_- > 
    q_+$): the birth-and-death chain with 5 states in continuous time.  The 
    states correspond to minima in the potential landscape experienced by a 
    single protein at different positions along the muscle fiber.}
    \label{fig:birth_and_death_chain}
\end{figure}

Let $\bar{Q}$ be the continuous-time birth-and-death chain on set 
$\{1,2,\ldots,K\}$ with parameters $q_+, q_- > 0$ depicted in Figure 
\ref{fig:birth_and_death_chain} for $K = 5$.  The chain transitions increments 
from state $i$ to $i + 1$ (resp. decrements to $i-1$) with rate $q_+$ (resp.  
$q_-$).  All other transitions, as well as decrementing from state $1$ and 
incrementing from state $K$, have rate $0$.  $\bar{Q}$ is reversible and its 
stationary distribution is, up to normalization, \cite{LevinPeresWilmerBook}:
\begin{equation}
    \pi_i \sim \left(\frac{q_+}{q_-}\right)^i 
    \label{eq:stat_distrib_birth_and_death_chain}
\end{equation}

Let our control objective be to maintain the target distribution $\pi^*$, a 
geometric distribution on $\{1,\ldots,K\}$:
\begin{equation}
    \pi^*_i \sim b^i \label{eq:target_distribution_birth_and_death_chain}
\end{equation}
where $b > 0$.  Then applying Theorems \ref{thm:I_proj_stat_distrib} and 
\ref{thm:I_proj_stat_distrib_reversible_base_chain} we find the minimum-power 
controlled chain $\bar{P}^*$ (\ref{eq:minpowerchain}) with stationary 
distribution $\pi^*$ to be another birth-and-death chain with increment and 
decrement rates $p_+^*, p_-^*$:
\begin{equation}
    p_{\pm}^* = (q_+ q_-)^{\frac{1}{2}} b^{\pm \frac{1}{2}} 
    \label{eq:birth_and_death_chain_cont_time_parameters}
\end{equation}
The cost per unit time of this birth-and-death chain $\bar{P}^*$ is 
(\ref{eq:KL_div_rev_cont_time}):
\begin{equation}
    D(\bar{P}^*||\bar{Q}^*) = \left(\sqrt{q_+} - \sqrt{q_- b}\right)^2 
    \left(\frac{1-b^{K-1}}{1-b^K}\right) 
    \label{eq:birth_and_death_chain_cont_time_cost}
\end{equation}
If $b < 1$, then as $K \ra \infty$, we have $D(\bar{P}^*||\bar{Q}) \ra 
\left(\sqrt{q_+} - \sqrt{q_- b}\right)^2$.

\begin{figure}[!t]
    \centering
    \includegraphics[width=3in]{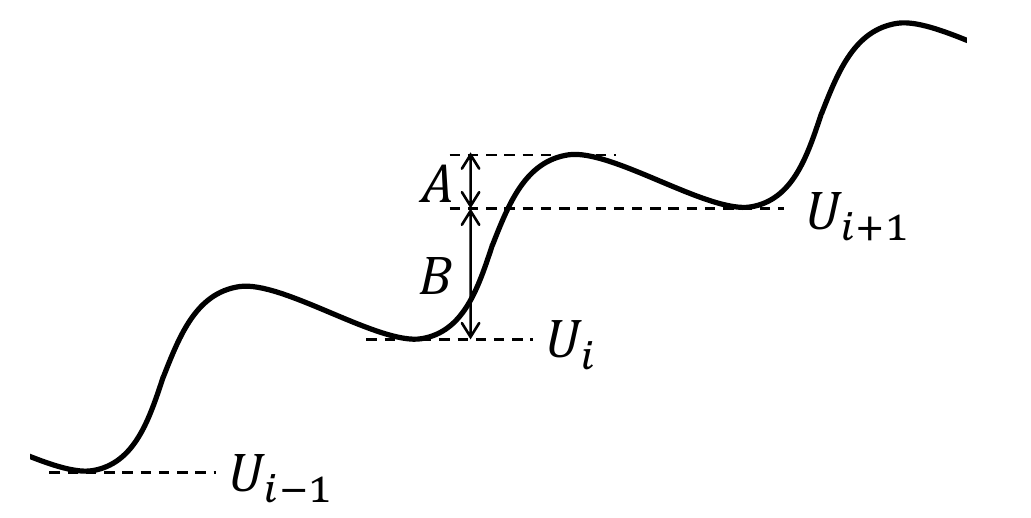}
    \caption{A tilted periodic potential energy landscape.
    We consider the $i$-th local minimum of the potential as a discrete 
    position state with energy $U_i$ and energy difference $B = U_{i+1} - U_i$ 
    between adjacent states.  The activation energy to go left (resp. right) is 
    $A$ (resp. $A+B$).}
    \label{fig:energy_potential_snapshot}
\end{figure}

Recalling the motivating example of section \ref{sec:introduction}, we can 
think of the birth-and-death chain as a biased random walk, where the random 
walker tends to spend more time at small values of $i$ if $q_- > q_+$.  This is 
a toy model of a myosin protein moving on an actin filament in muscles -- a 
random walk in a tilted energy potential with periodically spaced minima 
corresponding to discrete steps along the fiber.  The energy potential is 
depicted in Figure \ref{fig:energy_potential_snapshot}: a transition from the 
$i+1$-th energy minimum to the $i$-th energy minimum must overcome activation 
energy $A$, and the reverse transition must overcome activation energy $B+A$, 
where $B$ is the energy difference between adjacent states.  In the bag-holding 
thought experiment, $B$ is the gravitational potential energy difference 
between adjacent states, with state $1$ being closest to the ground.

In terms of these energies, the increment and decrement rates $q_+, q_-$ are:
\begin{align}
    q_- &= c e^{-\beta A} \\
    q_+ &= c e^{-\beta (A + B)}
\end{align}
for some constant $c$, where $\beta$ is the inverse temperature.  The 
stationary distribution of the uncontrolled chain $\bar{Q}$ is, up to 
normalization:
\begin{equation}
    \pi_i \sim \left(\frac{q_+}{q_-}\right)^i = e^{-\beta B i}
\end{equation}

Let's write the target nonequilibrium distribution as:
\begin{equation}
    \pi^*_i \sim b^i = e^{-\beta B^* i} 
    \label{eq:target_distribution_cont_time_birth_and_death_chain}
\end{equation}
with energy difference $B^* = (-1/\beta) \log(b)$ between adjacent states.


Since the optimal controlled chain $\bar{P}^*$ is another birth-and-death 
chain, we can write its parameters $p^*_+, p^*_-$ 
(\ref{eq:birth_and_death_chain_cont_time_parameters}) in terms of energies 
$A^*$ and $B^*$:
\begin{align}
    p^*_- &= c e^{-\beta A^*} \\
    p^*_+ &= c e^{-\beta (A^* + B^*)}
\end{align}
where, using (\ref{eq:birth_and_death_chain_cont_time_parameters}), we find the 
controlled activation energy $A^*$:
\begin{equation}
    A^* = A + \frac{1}{2}\left(B - B^*\right)
\end{equation}
The cost per unit time of this birth-and-death chain $\bar{P}^*$ is 
(\ref{eq:birth_and_death_chain_cont_time_cost}):
\begin{equation}
    D(\bar{P}^*||\bar{Q}^*) = c e^{-\beta A} \left(e^{-\frac{\beta}{2} B} - 
    e^{-\frac{\beta}{2} B^*}\right)^2 \left(\frac{1-e^{-\beta B^* 
    (K-1)}}{1-e^{-\beta B^* K}}\right)
\end{equation}
where if $B^* > 0$, then the last factor tends to $1$ as $K \ra \infty$.

In the muscle fiber thought experiment (where $B$ is the gravitational 
potential energy difference between adjacent states and state $0$ is closest to 
the ground) if $B^* < B$, then target distribution $\pi^*$ 
(\ref{eq:target_distribution_cont_time_birth_and_death_chain}) corresponds to 
imposing a constant force upwards (away from state $1$) on the random-walking 
myosin protein.  The control objective is macroscopic: rather than control the 
microscopic trajectory of a single myosin protein, we imagine controlling a 
large collection of identical, independent myosin proteins in the same way by 
imposing the same controlled chain $\bar{P}^*$ on all myosins; the bag-holder's 
arm position is determined by an average over the positions of this collection 
of myosins.

\subsubsection{Discrete time} \label{sec:discrete_time_birth_and_death_chain}

Let $Q$ be the discrete-time birth-and-death chain on set $\mc{X} = 
\{1,\ldots,K\}$ with transition probability $q_+$ (resp. $q_-$) to increment 
(resp. decrement) the state from $i$ to $i+1$ (resp. $i-1$).  The stationary 
distribution $\pi$ of $Q$ is as in 
(\ref{eq:stat_distrib_birth_and_death_chain}),
the same as in the continuous time case with transition rates $q_+, q_-$.  Let 
our control objective be to maintain the target distribution $\pi^*_i \sim 
b^i$, a geometric distribution on $\{1,\ldots,K\}$ 
(\ref{eq:target_distribution_birth_and_death_chain}).

Then in contrast to the continuous time case of section 
\ref{sec:cont_time_birth_and_death_chain}, the minimum-power controlled chain 
$P^*$ (\ref{eq:def_I_proj_stat_distrib_disc_time}) is not in general a 
birth-and-death chain.  Consider this numerical example: let the target 
nonequilibrium stationary distribution be as in 
(\ref{eq:target_distribution_birth_and_death_chain}) with $b = \sqrt{q_-/q_+}$:
\begin{equation}
    \pi^*_i \sim \sqrt{\pi_{K-i}} \sim \left(\frac{q_+}{q_-}\right)^{-i/2} 
    \label{eq:target_distribution_birth_and_death_chain_discrete_time}
\end{equation}
$\pi^*$ (\ref{eq:target_distribution_birth_and_death_chain_discrete_time}) is 
biased the other way from $\pi_i$, assigning most of its mass to large values 
of $i$ if $q_+ < q_-$.  The square root in 
(\ref{eq:target_distribution_birth_and_death_chain_discrete_time}) makes 
$\pi^*$ look more uniform than $\pi$.

Let $q_- = 0.2$, $q_+ = 0.1$, so that the stationary distribution of $Q$ is 
$\pi_i \sim 2^{-i}$, and $\pi^*_i \sim 2^{i/2}$.  Figure 
\ref{fig:birth_and_death_chain_I_proj} shows the non-zero off-diagonal elements 
of $P^*_{ij}$ with three-digit precision; the increment and decrement 
probabilities vary with state $i$, so $P^*$ is not a birth-and-death chain.

\begin{figure}[!t]
    \centering
    \includegraphics[width=3in]{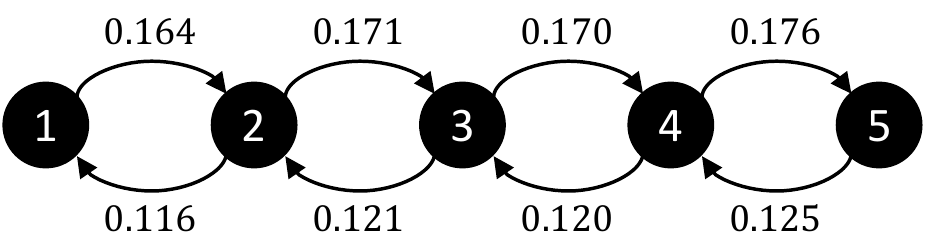}
    \caption{The minimum-power discrete-time controlled chain $P^*$ 
    (\ref{eq:def_I_proj_stat_distrib_disc_time}) for the uncontrolled 
    birth-and-death chain $Q$ with parameters $q_- = 0.2$, $q_+ = 0.1$ and 
    target stationary distribution $\pi^*_i \sim 2^{i/2}$.  Transitions from a 
    state to itself are not shown.}
    \label{fig:birth_and_death_chain_I_proj}
\end{figure}

Finally, Figure \ref{fig:birth_and_death_chain_cost} depicts the time evolution 
of distribution $\mu_t = \mu_0 P^{*t}$ with $\mu_0 = \pi$, and the cost 
$D_{\mu_t}(P^*||Q)$ converging to the minimum power $D(P^*||Q) \approx 0.0315$ 
to maintain the nonequilibrium distribution $\pi^*$.

\begin{figure}[!t]
    \centering
    \includegraphics[width=3.8in]{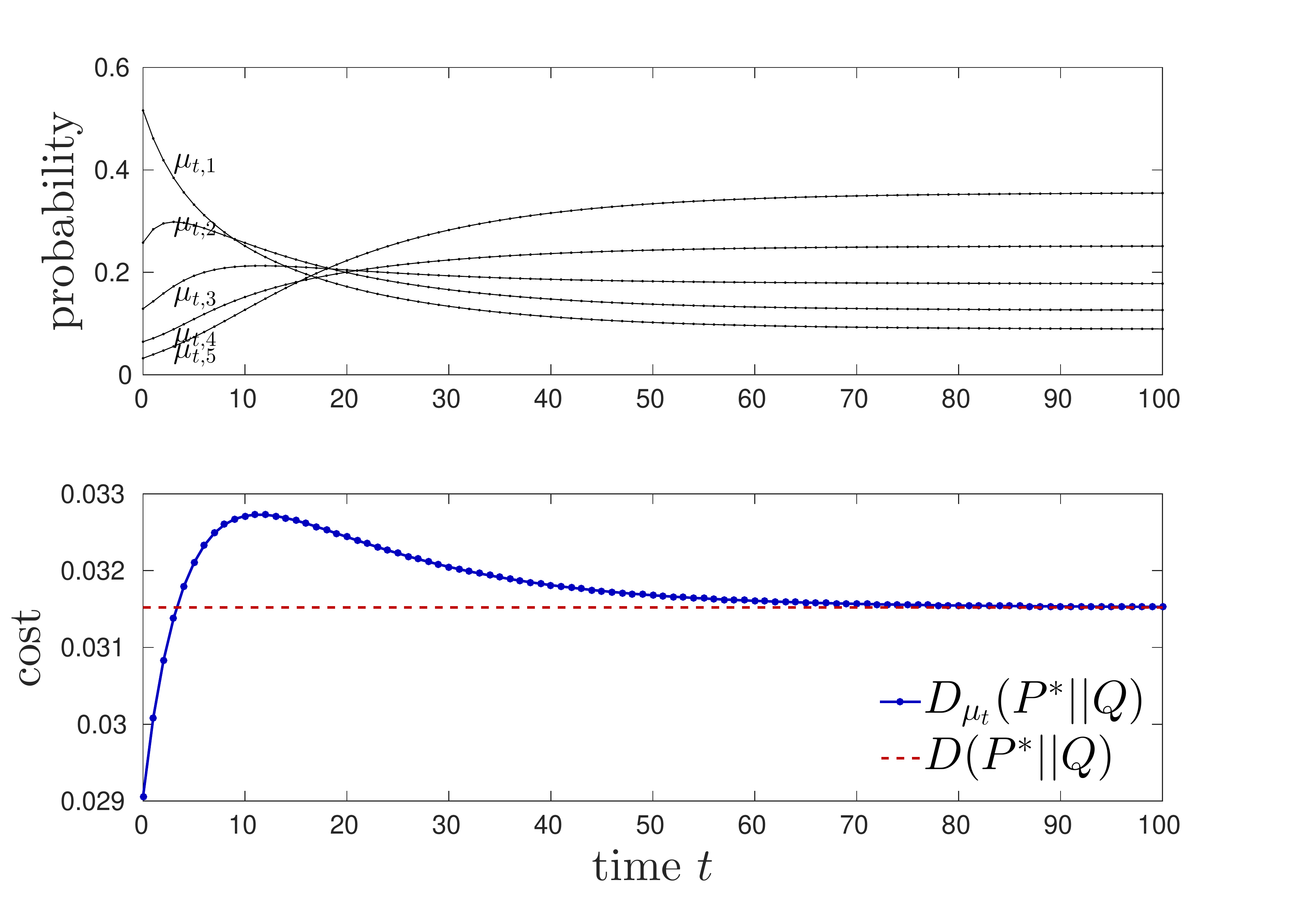}
    \caption{(top) Time evolution of the $|\mc{X}| = 5$ components of $\mu_t = 
    \mu_0 (P^{*})^t$ with $\mu_0 = \pi$, where $P^*$ is the minimum cost chain 
    (\ref{eq:def_I_proj_stat_distrib_disc_time}), showing $\mu_t \ra \pi^*$ as 
    $t \ra \infty$.  (bottom) The cost (blue) $D_{\mu_t}(P^*||Q)$ 
    (\ref{eq:cost_disc_time_KL_div}) and the minimum power (red) $D(P^*||Q)$.}
    \label{fig:birth_and_death_chain_cost}
\end{figure}

Returning to the molecular motor picture, the discrete-time chain $P^*$ differs 
from its continuous-time cousin of section 
\ref{sec:cont_time_birth_and_death_chain} in that $P^*$ is not a 
birth-and-death chain.  Thus an optimal discrete-time control policy modifies 
the base birth-and-death chain $Q$ in a way that depends on state $i$, and so 
can't be thought of as corresponding to a constant, state-independent force 
upwards (away from state $1$) as in the continuous time case.

\section{Discussion}\label{sec:discussion}

This work derives the minimum power required to maintain a target stationary 
distribution given uncontrolled Markov chain dynamics in both discrete and 
continuous time.  We relate KL divergence-like penalties from control theory 
\cite{Todorov06,Todorov09} to the power used to control a Markov process, using 
muscular molecular motors as a guiding example.  The problem of minimizing a KL 
divergence subject to a constrained stationary distribution is familiar from 
large deviations theory 
\cite{DemboZeitouniBook,BaldiPiccioni98,CsiszarCoverChoi87}; the novelty of our 
work is in relating these large deviations results to the thermodynamics of 
``holding'' a distribution, and in computing the minimum-cost chains in some 
important examples: the birth-and-death process in continuous time (a toy model 
for a muscular molecular motor) and two-state chains in discrete time.


To the best of our knowledge, this is the first time a lower bound on average 
power consumption has been studied in detail for control of the stationary 
distribution.  \cite{HorowitzEngland17} study a related quantity, the minimum 
entropy production rate associated with adding edges (allowing control to 
increase but not decrease transition rates) to a continuous time Markov chain, 
but their notion of cost has the interpretation of energy efficiency, as 
opposed to ours, which is to be interpreted as total energy usage.  
Unsurprisingly, different notions of cost lead to different optimal controlled 
chains: the optimal controlled chain of \cite{HorowitzEngland17} depends on the 
underlying uncontrolled chain only in the requirement that it be much faster, 
while our minimum-cost controlled chain is a function of the uncontrolled 
chain; this function is easy to compute (\ref{eq:reversible_cont_time}) in the 
case of a continuous time, reversible uncontrolled chain, an important case in 
modeling biological processes.

%
%

\section*{Acknowledgment}

The authors gratefully acknowledge Hideo Mabuchi for suggestions and insightful discussions.

\bibliographystyle{IEEEtran}
\bibliography{refs}

\begin{thebibliography}{10}
\providecommand{\url}[1]{#1}
\csname url@samestyle\endcsname
\providecommand{\newblock}{\relax}
\providecommand{\bibinfo}[2]{#2}
\providecommand{\BIBentrySTDinterwordspacing}{\spaceskip=0pt\relax}
\providecommand{\BIBentryALTinterwordstretchfactor}{4}
\providecommand{\BIBentryALTinterwordspacing}{\spaceskip=\fontdimen2\font plus
\BIBentryALTinterwordstretchfactor\fontdimen3\font minus
  \fontdimen4\font\relax}
\providecommand{\BIBforeignlanguage}[2]{{%
\expandafter\ifx\csname l@#1\endcsname\relax
\typeout{** WARNING: IEEEtran.bst: No hyphenation pattern has been}%
\typeout{** loaded for the language `#1'. Using the pattern for}%
\typeout{** the default language instead.}%
\else
\language=\csname l@#1\endcsname
\fi
#2}}
\providecommand{\BIBdecl}{\relax}
\BIBdecl

\bibitem{CsiszarCoverChoi87}
I.~Csisz\'ar, T.~M. Cover, and B.-S. Choi, ``Conditional limit theorems under
  {M}arkov conditioning,'' \emph{IEEE Trans. Inf. Theory}, vol.~33, pp.
  788--801, 1987.

\bibitem{FeynmanLectures}
R.~P. Feynman, \emph{The {Feynman} lectures on physics}.\hskip 1em plus 0.5em
  minus 0.4em\relax Addison-Wesley, 1964, vol.~I.

\bibitem{Huxley57}
A.~F. Huxley, ``Muscle structure and theories of contraction,'' \emph{Prog.
  Biophys. Biophys. Chem.}, vol.~7, pp. 257--318, 1957.

\bibitem{Qian00}
H.~Qian, ``The mathematical theory of molecular motor movement and
  chemomechanical energy transduction,'' \emph{J. Math. Chem.}, vol.~27, pp.
  219--234, 2000.

\bibitem{Todorov06}
E.~Todorov, ``Linearly-solvable {M}arkov decision problems,'' in \emph{Advances
  in Neural Information Processing Systems 19}, 2007, pp. 1369--1376.

\bibitem{Todorov09}
------, ``Efficient computation of optimal actions,'' \emph{Proceedings of the
  National Academy of Sciences}, vol. 106, pp. 11\,478--11\,483, 2009.

\bibitem{HorowitzZhouEngland17}
J.~M. Horowitz, K.~Zhou, and J.~L. England, ``Minimum energetic cost to
  maintain a target nonequilibrium state,'' \emph{Phys. Rev. E}, vol.~95, p.
  042102, 2017.

\bibitem{HorowitzEngland17}
J.~M. Horowitz and J.~L. England, ``Information-theoretic bound on the entropy
  production to maintain a classical nonequilibrium distribution using
  ancillary control,'' \emph{Entropy}, vol.~19, p. 333, 2017.

\bibitem{DemboZeitouniBook}
A.~Dembo and O.~Zeitouni, \emph{Large deviations techniques and applications,
  2nd edition}.\hskip 1em plus 0.5em minus 0.4em\relax Springer-Verlag Berlin
  Heidelberg, 2010.

\bibitem{BaldiPiccioni98}
P.~Baldi and M.~Piccioni, ``A representation formula for the large deviation
  rate function for the empirical law of a continuous time {M}arkov chain,''
  \emph{Stat. \& Prob. Letters}, vol.~41, pp. 107--115, 1999.

\bibitem{MezardMontanariBook}
M.~M\'{e}zard and A.~Montanari, \emph{Information, Physics, and
  Computation}.\hskip 1em plus 0.5em minus 0.4em\relax Oxford University Press,
  2009.

\bibitem{KittelKroemerBook}
C.~Kittel and H.~Kroemer, \emph{Thermal Physics}, 2nd~ed.\hskip 1em plus 0.5em
  minus 0.4em\relax W. H. Freeman and Company, 1980.

\bibitem{LandauLifshitz}
E.~L. L.D.~Landau, \emph{Statistical Physics}.\hskip 1em plus 0.5em minus
  0.4em\relax Elsevier, 1951, vol.~5.

\bibitem{Shannon1948}
C.~E. Shannon, ``A mathematical theory of communication,'' \emph{The Bell
  System Technical Journal}, vol.~27, pp. 379--423, 623--656, 1948.

\bibitem{CoverThomasBook}
T.~M. Cover and J.~A. Thomas, \emph{Elements of Information Theory, Second
  Edition}.\hskip 1em plus 0.5em minus 0.4em\relax Hoboken, NJ: John Wiley \&
  Sons, 2006.

\bibitem{Kesidis93}
G.~Kesidis and J.~Walrand, ``Relative entropy between {M}arkov transition rate
  matrices,'' \emph{{IEEE} Trans. Inf. Theory}, vol.~39, pp. 1056--1057, 1993.

\bibitem{deLaFortelle00}
A.~de~La~Fortelle, ``Large deviation principle for {M}arkov chains in
  continuous time,'' \emph{Probl. Inf. Trans.}, vol.~37, pp. 120--139, 2001.

\bibitem{BertiniFaggionatoGabrielli14}
L.~Bertini, A.~Faggionato, and D.~Gabrielli, ``From level 2.5 to level 2 large
  deviations for continuous time {M}arkov chains,'' \emph{Markov processes and
  related fields}, vol.~20, pp. 545--562, 2014.

\bibitem{Gopalkrishnan16}
M.~Gopalkrishnan, ``A cost/speed/reliability tradeoff to erasing,''
  \emph{Entropy}, vol.~18, p. 165, 2016.

\bibitem{Julicher}
F.~J{\"u}licher, A.~Ajdari, and J.~Prost, ``Modeling molecular motors,''
  \emph{Reviews of Modern Physics}, vol.~69, pp. 1269--1281, 1997.

\bibitem{LevinPeresWilmerBook}
D.~A. Levin, Y.~Peres, and E.~L. Wilmer, \emph{Markov chains and mixing
  times}.\hskip 1em plus 0.5em minus 0.4em\relax American Mathematical Society,
  2009.

\bibitem{Szilard29}
L.~Szilard, ``On the decrease of entropy in a thermodynamic system by the
  intervation of intelligent beings,'' \emph{Zeitschrift f\"{u}r Physik},
  vol.~53, pp. 840--856, 1929.

\end{thebibliography}

\appendices

\section{Physical example of KL divergence as energy cost} \label{app:physical_systems}

\begin{figure}[!t]
    \centering
    \includegraphics[width=3in]{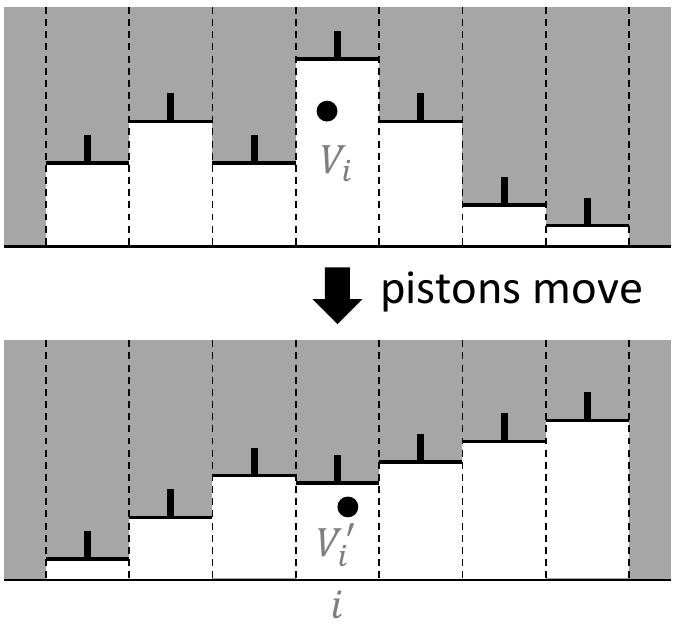}
    \caption{(top) A gas molecule is found underneath the $i$-th piston with 
    probability proportional to volume $V_i$. (bottom) Inserting impermeable 
    partitions between the pistons, isothermally compressing to new volumes 
    $V'_i$, and then removing the partitions incurs expected work on the system 
    proportional to the KL divergence between the initial and final probability 
    distributions of the molecule's position.}
    \label{fig:KL_div_work}
\end{figure}


We offer an example of the KL divergence as the cost of sampling from a 
target distribution $p$ given a ``base'' distribution $q$ (see discussion in 
section \ref{sec:KL_div_cost}).

Figure \ref{fig:KL_div_work} presents a slight generalization of the Szilard 
information engine \cite{Szilard29}: molecules of an ideal gas inhabit the 
space formed by movable pistons indexed by $i \in \mc{X}$.  Let $V_i$ denote 
the volume beneath the $i$-th piston and $V \eqdef \sum_{i \in \mc{X}} V_i$ be 
the total volume.  A molecule of gas is equally likely to be found anywhere 
within the space beneath the pistons, corresponding to probability distribution 
$p \eqdef (V_i/V)_{i \in \mc{X}}$ on the pistons.  Now imagine we add 
impermeable partitions between the pistons (vertical dashed lines) and move the 
pistons to new positions (green lines) at constant temperature (perhaps the 
bottom of the box is in thermal contact with a heat reservoir).  The partitions 
prevent mixing between different pistons during compression; we remove them 
afterwards.  Let $V_i'$ be the new volume beneath the $i$-th piston, $V' \eqdef 
\sum_{i \in \mc{X}} V'_i$ be the new total volume, and $q \eqdef (V'_i/V)_{i 
\in \mc{X}}$ be the new piston probability distribution after this deformation.  

What is the work used to perform this deformation?  The work to isothermally 
compress an ideal gas is $k_\text{B} \tau \log(V_i/V'_i)$, where $k_\text{B}$ 
is Boltzmann's constant and $\tau$ is the temperature.  A gas molecule 
occupies the space beneath the $i$-th piston with probability $V_i/V$ before 
compression, so the expected work to move the pistons is
\begin{equation}\label{gaspistons}
    \sum_{i \in \mc{X}} \frac{V_i}{V} k_\text{B} \tau 
    \log\left(\frac{V_i}{V'_i}\right) = k_\text{B} \tau (D(p||q) + \log(V/V'))
\end{equation}
per molecule of gas.  If the new pistons positions are such that the total 
volume is unchanged, then $V' = V$ and the work is proportional to the KL 
divergence $D(p||q)$. We can imagine a sequence of such gas boxes and 
deformations, where the pre-deformation volumes $(V_{t,i})_{i \in \mc{X}}$ at 
time $t$ are determined by drawing a single molecule from volumes 
$(V_{t-1,i})_{i \in \mc{X}}$ at time $t-1$, forming a Markov chain with KL 
divergence control cost.

\section{Proof of Theorem 1} \label{app:proof}

\subsection{Part 1}
We wish to solve the following problem:
\begin{equation}
    P^* \eqdef \argmin_{P: \pi(P)= \pi^*} D(P || Q) \label{eq:def_optimization_problem}
\end{equation}

For ease of manipulation, we work with the empirical joint transition probability distribution $N = (N[i,j])_{i,j\in\mc{X}}$
\begin{equation}
	N[i,j](x_0^n) \eqdef \frac{1}{n}|\{t : (x_{t-1},x_t) = (i,j)\}|.
\end{equation}

We can solve (\ref{eq:def_optimization_problem}) by setting up the Lagrangian: 
\begin{align}
    \Lambda \eqdef & D(P||Q) + \alpha \left(\sum_{i,j \in \mc{X}} N_P[i,j] - 1\right) \nonumber \\
    &+ \sum_{i \in \mc{X}} \lambda_i \left(\sum_{j \in \mc{X}} N_P[i,j] - \pi^*[i]\right) \\
    & + \sum_{j \in \mc{X}} \eta_j \left(\sum_{i \in \mc{X}} N_P[i,j] - \pi^*[j]\right) \label{eq:def_Lagrangian}
\end{align}
where $\alpha$ is a Lagrange multiplier enforcing normalization of the joint transition probability distribution under $P$, $N_P$, and $(\lambda_i)_{i \in \mc{X}}$, $(\eta_i)_{i \in \mc{X}}$ are Lagrange multipliers enforcing the stationary distribution condition $\pi(P) = \pi^*$.  Our solution is a stationary point of the Lagrangian with respect to $N_P$:
\begin{align}
&\left.\frac{\partial \Lambda}{\partial N_P[i,j]} \right|_{N_P = N^*} = \left. \frac{\partial D(P||Q)}{\partial N_P[i,j]} + \alpha + \lambda_i + \eta_j \right|_{N_P = N^*} = 0 \nonumber \\
&\forall i,j \label{eq:Lagrangian_partial_derivative} 
\end{align}
Since $\pi_P = \pi^*$, then $N^*$, $A^*$ satisfy:
\begin{align}
    \left.\frac{\partial D(P||Q)}{\partial N_P[i,j]}\right|_{N_P = N^*} &= \left.\frac{\partial}{\partial N_P[i,j]}\left(D(N_P||N_Q) - D(\pi^*||\pi)\right)\right|_{N_P = N^*} \\
    &= \left.\frac{\partial}{\partial N_P[i,j]}D(N_P||N_Q)\right|_{N_P = N^*} \\
    &= \left.1 + \log\left(\frac{N_P[i,j]}{N_Q[i,j]}\right)\right|_{N_P = N^*} \label{eq:Lagrangian_partial_derivative_expression}
\end{align}
Now using (\ref{eq:def_Lagrangian}), (\ref{eq:Lagrangian_partial_derivative}), and (\ref{eq:Lagrangian_partial_derivative_expression}) we find:
\begin{align}
    N^*[i,j] = N_Q[i,j]e^{-1-\alpha-\lambda_i-\eta_j}
\end{align}
Equivalently
\begin{equation}
    N^* = \text{diag}((e^{-\lambda_i})_{i \in \mc{X}}) \ N_Q \ \text{diag}((e^{-\eta_i})_{i \in \mc{X}}) e^{-1-\alpha}
\end{equation}
Now using the condition $\pi(P) = \pi^*$:
\begin{align}
    \pi^*[i] &= \sum_{j \in \mc{X}} N^*[i,j] \\
    &= e^{-1-\alpha-\lambda_i}\sum_{j \in \mc{X}} N_Q[i,j] e^{-\eta_j}
\end{align}
Therefore
\begin{equation}
    \lambda_i = \log\left(\sum_{j \in \mc{X}} N_Q[i,j] e^{-\eta_j}\right) - \log(\pi^*[i]) - 1 - \alpha \label{eq:recursive_lambda}
\end{equation}
Analogously
\begin{equation}
    \eta_i = \log\left(\sum_{j \in \mc{X}} N_Q[j,i] e^{-\lambda_j}\right) - \log(\pi^*[i]) - 1 - \alpha \label{eq:recursive_eta}
\end{equation}
Now using the normalization condition
\begin{equation}
    1 = \sum_{i,j} N^*[i,j]
\end{equation}
we find
\begin{equation}
    \alpha = \log\left(\sum_{i,j \in \mc{X}} N_Q[i,j]e^{-\lambda_i - \eta_j}\right) - 1 \label{eq:recursive_alpha}
\end{equation}

\subsection{Part 2}
In the continuous case, we work with rate matrices instead of probability transition matrices. We wish to solve the following problem:

\begin{equation}
\begin{aligned}
& \underset{\B{P}: \pi{(\B{P})} = \pi^*}{\text{min}}
& & D(\B{P}||\B{Q}) \\
\end{aligned}
\end{equation}
The Lagrangian is $D(\B{P}||\B{Q})+\sum_i \lambda_i \sum_j \B{P}_{ij} + \sum_j \eta_j \sum_i \pi_i \B{P}_{ij}.$ Differentiating it we get the conditions 
\begin{equation}
      \B{P}_{ij}^* = \begin{cases} \label{eq:rec} 
	\B{Q}_{ij} e^{\eta_i-\eta_j} & i \neq j \\
	-\sum_{j:j\neq i} \B{Q}_{ij} e^{\eta_i-\eta_j} & i = j \\
   \end{cases}
\end{equation}

\end{document}